\tikzstyle arrowstyle=[scale=1]
\tikzstyle directed=[postaction={decorate,decoration={markings,
    mark=at position .5 with {\arrow[arrowstyle]{stealth}}}}]
        \pgfpointadd{\pgfpointdecoratedinputsegmentlast}{\pgfpoint{1pt}{1pt}}
\newcommand{\ra}{\rightarrow}
\newcommand{\e}{\varepsilon}
\newcommand{\Ra}{\Rightarrow}
\newcommand{\cyc}{\operatorname{cyc}}
\newcommand\A{\mathbf A}
\newcommand\B{\mathbf B}
\newcommand\C{\mathbf C}
\newcommand\M{\mathbf M}
\newcommand{\X}{\mathbf{X}}
\newcommand{\Y}{\mathbf{Y}}
\newcommand{\Z}{\mathbf{Z}}
\newcommand{\ol}{\overline}
\newtheorem{thm}{Theorem}
\newtheorem{prop}[thm]{Proposition}
\newtheorem{lem}[thm]{Lemma}
\newtheorem{cor}[thm]{Corollary}
\newtheorem{defn}[thm]{Definition}
\newcommand{\s}{\mathbf{S}}
\newcommand{\Aho}{MR0258547}
\newcommand{\HU}{MR645539}
\newcommand{\Brandstadt}{MR630064}
\newcommand{\Oshiba}{MR0478788}
\newcommand{\Maslov}{MR0334597}
\begin{document}

\title{
Permutations of  context-free and indexed languages}
\author{Tara Brough}
\address{School of Mathematics and Statistics,
University of St Andrews,
North Haugh,
St Andrews
KY16 9SS, Scotland}
\email{t.brough@st-andrews.ac.uk}

\author{Laura Ciobanu}
\address{Mathematics Department, University of Neuch\^atel,
Rue Emile - Argand 11, CH-2000 Neuch\^atel, Switzerland}
\email{laura.ciobanu@unine.ch}

\author{Murray Elder}
\address{School of Mathematical and Physical Sciences,
The University of Newcastle,
Callaghan NSW 2308, Australia}
\email{murray.elder@newcastle.edu.au}

\keywords{indexed grammar; context-free grammar;  cyclic closure}  
\subjclass[2010]{20F65; 68Q45}
\date{\today}
\thanks{Research supported by  Australian Research Council grant  FT110100178, Swiss National Science 
Foundation Professorship FN PP00P2-144681/1 and London Mathematical Society Scheme 4 grant 41348}

\begin{abstract}
We consider the cyclic closure of a language, and its
 generalisation to the operators $C^k$ introduced by Brandst\"adt.
 We prove that the cyclic closure of an indexed language is indexed, and 
 that if $L$ is a context-free language then $C^k(L)$ is indexed.
\end{abstract}

\maketitle

\section{Introduction}

The {\em cyclic closure} of a language $L$ is the language  
\[\cyc(L)=\{w_2w_1\mid w_1w_2\in L\}.\] It is easy to show that the classes of regular, context-sensitive and recursively enumerable languages are closed under this operation.
Maslov and independently Oshiba \cite{\Maslov, \Oshiba}
proved that the class of  context-free languages is also closed under this operation. In this article we show that in addition,  the class of   indexed languages is closed under taking cyclic closure.

Brandst\"adt \cite{\Brandstadt} generalised the notion of  cyclic closure to a family of operators on languages $C^k$ for $k\in\mathbb N$ by defining
\[C^k(L)=\{w_{\sigma(1)}\ldots w_{\sigma(k)}\mid w_1\ldots w_k\in L,\sigma\in S_k\},\]
where $S_k$ is the set of permutations on $k$ letters.
So $C^2(L)$ is exactly the cyclic closure.
He proved that if $L$ is context-free and $k\geq 3$  the language  $C^k(L)$ 
is not context-free in general, while it is always context-sensitive.
Here we sharpen this result by showing that $C^k(L)$  is indexed.

A natural generalisation of context-free and indexed languages was given by Damm and co-authors in \cite{MR666544,MR864744}, where they defined the OI- and  IO-hierarchies of languages built out of  automata or grammars that extended the pushdown automata and indexed grammars, respectively. They define level-$n$ grammars inductively, allowing the flags at level $n$ to carry up to $n$ levels of parameters in the form of flags. Thus level-$0$ grammars generate context-free languages, and level-$1$ grammars produce indexed languages.
We conjecture that the class of level-$n$ languages is closed under cyclic closure, and also that 
if $L$ is an level-$n$ language then $C^k(L)$ is an level-$(n+1)$ language. 
This paper is the first step in proving this conjecture and completing the picture of cyclic closure and permutation operators for the OI- and IO-hierarchies.

\section{Preliminaries}\label{sec:prelim}

\subsection{Permutation operators}\label{permclose}

Brandst\"adt~\cite{\Brandstadt} defined  the language $C^k(L)$ to be the set of all words obtained from $L$ by permutating $k$ subwords according to some permutation.
In this article we specialise the definition to individual permutations as follows.

\begin{defn}[Permutation operator]
Let $k$ be a positive integer, $\sigma\in S_k$ a permutation on $\{1,\ldots,k\}$, and $L\subseteq \Sigma^*$ a language over a finite alphabet.
The  language $\sigma(L)$ is defined by $$\sigma(L)=\{w_{\sigma(1)}\cdots w_{\sigma(k)} \mid w_1\cdots w_k\in L\}.$$ 
\end{defn}

If $w=w_{\sigma(1)}\dots w_{\sigma(k)}\in\sigma(L)$ where $w_1\dots w_k\in L$, 
 some subwords $w_i$ could be empty.  We can write $w_1\ldots w_k$ as $u_1\ldots u_\ell$ 
for some $\ell\leq k$, where each $u_i$ is equal to some non-empty $w_j$.  Then there is a permutation 
$\tau\in S_\ell$ (called a {\em subpattern} of $\sigma$) such that $w= u_{\tau(1)} \ldots u_{\tau(\ell)}$.

For $\tau\in S_\ell$, define 
\[L_\tau = \{ w_{\tau(1)} \ldots w_{\tau(\ell)} \mid w_1\ldots w_\ell\in L, w_i\neq \e \; (1\leq i\leq \ell)\}. \]
Then $\sigma(L) = \bigcup_\tau L_\tau$ with $\tau$ ranging over all subpatterns of $\sigma$.
Thus if ${\mathcal C}_1$ and ${\mathcal C}_2$ are two language classes, with ${\mathcal C}_2$ closed under finite union, 
and we wish to show that $\sigma(L)\in {\mathcal C}_2$ for all $L\in {\mathcal C}_1$ and $\sigma\in S_k$, 
it suffices to show that $L_\tau\in {\mathcal C}_2$ for all $L\in {\mathcal C}_1$ for all $\tau\in \bigcup_{1\le \ell \le k} S_\ell$.

For any language $L$ and $k\in \mathbb{N}$, we have 
$$C^k(L) = \bigcup_{\sigma\in S_k} \sigma(L) =  \bigcup_{1\leq \ell\leq k}\bigcup_{\sigma\in S_\ell} L_\sigma.$$

Note that the language $L_\tau$ does not in general contain $L$ as a sublanguage, whereas $\sigma(L)$ contains $L$ since we may take all but one subword to be empty.

\subsection{Indexed languages}

We define an indexed language to be one that is generated by the following type of grammar:
\begin{defn}
[Indexed grammar; Aho \cite{\Aho}]
An {\em indexed grammar}  is a 5-tuple $(\mathcal N, \mathcal T, \mathcal I, \mathcal P, \mathbf{S})$ such that
\begin{enumerate}
\item $\mathcal N, \mathcal T, \mathcal I$ are three mutually disjoint sets of symbols, called {\em nonterminals, terminals} and {\em indices} (or {\em flags}) respectively.
\item $\mathbf S\in\mathcal N$ is the {\em start symbol}.
\item $\mathcal P$ is a finite set of {\em productions}, each having the form of one of the following:
\begin{enumerate}
\item  $\mathbf{A} \ra \mathbf{B}^f$.
\item $\mathbf{A}^f \ra v$.
\item  $\mathbf{A} \ra u$.
\end{enumerate}
where $\mathbf{A}, \mathbf{B} \in\mathcal N$, $f\in \mathcal I$ and $u,v\in(\mathcal N\cup\mathcal T)^*$.
\end{enumerate}
\end{defn}
The language defined by an indexed grammar is the set of all strings of terminals that can be obtained by successively 
applying production rules starting from a rule which has the start symbol $\mathbf S$ on the left. Production rules operate as follows. 
Let $\A\in \mathcal N, \gamma\in \mathcal I^*$ and suppose $\mathbf{A}^\gamma$ appears in some string.
\begin{enumerate}
\item applying $\mathbf{A} \ra \mathbf{B}^f$ replaces  $\mathbf{A}^\gamma$ by $\mathbf{B}^{f \gamma}$
\item if $\gamma=f \delta$ with $f\in \mathcal I$, applying $\mathbf{A}^f \ra \mathbf Ba \mathbf C$ replaces $\mathbf{A}^\gamma$ by 
$\mathbf B^\delta a \mathbf C^\delta$
\item applying  $\mathbf{A} \ra  \mathbf Ba \mathbf C$ replaces $\mathbf{A}^\gamma$ by  $\mathbf B^\gamma a \mathbf C^\gamma$.\end{enumerate}
We call the operation of  successively applying productions starting from one which has the start symbol $\mathbf S$ on the left and terminating at a string $u\in\mathcal T^*$  a {\em derivation} of $u$. We use the notation $\Rightarrow$ to denote a sequence of productions within a derivation, and call such a sequence a {\em subderivation}.

\begin{defn}[Normal form]
An indexed grammar $(\mathcal N, \mathcal T, \mathcal I, \mathcal P, \mathbf{S})$ is in {\em normal form} if
\begin{enumerate}
\item the start symbol only appears on the left side of a production,
\item  productions are of the following type:
\begin{enumerate}
\item  $\mathbf{A} \ra \mathbf{B}^f$
\item $\mathbf{A}^f \ra \mathbf B$
\item \label{third} $\mathbf{A} \ra \mathbf{BC}$
\item  \label{forth} $\mathbf{A} \ra a$
\end{enumerate}
where $\mathbf{A}, \mathbf{B}, \mathbf C \in\mathcal N$, $f\in \mathcal I$ and $a\in\mathcal T$.
\item $\mathcal I$ contains a special `end-of-flag' symbol $\$ $, and the start symbol begins with the 
string $\$ $ in its flag.  The symbol $ \$ $ is otherwise never used in any $\A \ra \B^f$ production.
\end{enumerate}
\end{defn}

An indexed grammar can be put into normal form as follows. 
Introduce a new nonterminal $\mathbf S_0$, a production $\mathbf S_0\ra\mathbf S$, and declare $\mathbf S_0$ to be the new start symbol.
This ensures condition (1).
For each production $\A^f \ra v$ with $v\not\in\mathcal N$, introduce a new nonterminal $\B$, add productions  $\A^f \ra \B, \B \ra v$, and remove $\A^f \ra v$.
By the same arguments used for Chomsky normal form, each production  $\mathbf{A} \ra u$ without flags can be replaced by a set of 
productions of type \ref{third} and \ref{forth} above.
Instead of beginning derivations with the start symbol $\s_0$ (with empty flag), begin with $\s_0^\$ $.  
(Introducing the symbol $\$ $ in this way into an existing indexed grammar is pointless, but harmless.
For constructing new grammars, however, it is often very useful to have a way of telling when a flag is `empty'.)

In an  indexed grammar in normal form, every nonterminal in a derivation has a flag of the form $\gamma\$ $ where $\gamma\in\mathcal I^*$. 
The symbol $\$$ is removed only by productions of type  \ref{forth}.

\subsection{Tree-shapes and parse tree skeletons}
A {\em parse tree} in an indexed or context-free grammar is a standard way to represent a derivation in the grammar, see for example \cite{\HU}. 
In this paper, all trees will be rooted, and will be regarded as being drawn in the plane, with the root at the top, and
with a fixed orientation.  
For a tree $T$, the {\em shape} of $T$ is the tree $\hat{T}$ obtained from $T$ as follows:
 \begin{enumerate}
 \item add an edge to the root vertex, so that the root has degree 1
 \item delete all vertices of degree $2$
  \end{enumerate}
 A {\em tree-shape} is hence a tree with no vertices of degree $2$, and root degree 1.
For example, the two possible  tree-shapes with $3$ leaves 
are shown in Figure~\ref{fig:shapes}.

\begin{figure}[h!]
 {\begin{tikzpicture}[scale=.5]
\draw[decorate] (1,1) -- (3,3)--(3,4.4);
\draw[decorate]  (2,2) -- (3,1);
\draw[decorate]  (3,3) -- (4,2);
\end{tikzpicture}}
 \hspace{5mm}
 {\begin{tikzpicture}[scale=.5]
\draw[decorate]  (2,2) -- (3,3)--(3,4.4);
\draw[decorate]  (4,2) -- (3,1);
\draw[decorate]  (3,3) -- (5,1);
\end{tikzpicture}}
\caption{Possible tree-shapes with $3$  leaves.}
\label{fig:shapes} 
\end{figure}
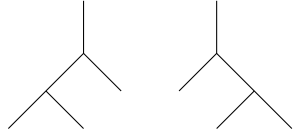

Let $P$ be a parse tree in some grammar and $F$ a subtree of $P$ (with all labels preserved). Let $B$ be the $1$-neighbourhood\footnote{the subtree of $P$ consisting of all edges with at least one end vertex in $F$}
of $F$, and suppose $F$ has  tree-shape $T$.
We call $B$ a {\em $T$-skeleton}
of $P$, and $F$ the {\em frame} of $B$.

If $\Gamma$ is a grammar, then we call a tree $B$ labelled by symbols from $\Gamma$ a {\em $T$-skeleton in $\Gamma$}
if $B$ is a $T$-skeleton of some parse tree in $\Gamma$.
For example, if $T$ is the first of the tree-shapes in Figure~\ref{fig:shapes}, then Figure~\ref{fig:skeleton} is a $T$-skeleton 
in some indexed grammar, with the frame $F$ marked in bold.

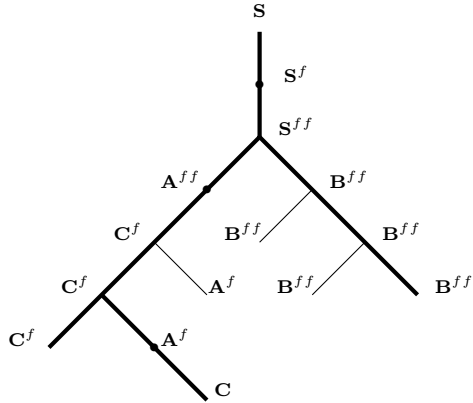
\begin{figure}[h!]   \tiny
\begin{tikzpicture}[scale=.7]
\draw[ultra thick, decorate] (0,1) -- (4,5) -- (4,7);
\draw[ultra thick, decorate] (4,5) -- (7,2);
\draw[ultra thick, decorate] (1,2) -- (3,0);
\draw[decorate] (2,3) -- (3,2);
\draw[decorate] (4,3) -- (5,4);
\draw[decorate] (5,2) -- (6,3);

\draw (2,1) node {$\bullet$};
\draw (3,4) node {$\bullet$};
\draw (4,6) node {$\bullet$};

\draw (3.3,2.2) node {$\A^{f}$};
\draw (2.4,1.2) node {$\A^{f}$};
\draw (3.3,.2) node {$\C$};

\draw (5.7,4.2) node {$\B^{ff}$};
\draw (6.7,3.2) node {$\B^{ff}$};
\draw (7.7,2.2) node {$\B^{ff}$};
\draw (3.7,3.2) node {$\B^{ff}$};
\draw (4.7,2.2) node {$\B^{ff}$};

\draw (4,7.4) node {$\s$};
\draw (4.7,6.2) node {$\s^{f}$};
\draw (4.7,5.2) node {$\s^{ff}$};
\draw (2.5,4.2) node {$\A^{ff}$};
\draw (1.5,3.2) node {$\C^f$};
\draw (.5,2.2) node {$\C^f$};
\draw (-.5,1.2) node {$\C^f$};
\end{tikzpicture}
\caption{A $T$-skeleton  in an indexed grammar.
\label{fig:skeleton}}
\end{figure}

If $F$ is a path, then the shape of $F$ is an edge, and so we call a skeleton with frame $F$ an {\em edge-skeleton}.

\section{Main results}\label{sec:main}

Brandst\"adt proved that the classes of regular, context-sensitive and recursively enumerable languages are closed 
under the operation $C^k$ for all $k$ \cite{\Brandstadt}.
We start by  reproving Theorem 1 of \cite{\Brandstadt}, modified for $\sigma(L)$.
\begin{lem}
If $L$ is regular then $\sigma(L)$ is regular for any fixed permutation $\sigma$.
\end{lem}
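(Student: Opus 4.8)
The plan is to start from a finite automaton for $L$ and reduce, via the subpattern decomposition of Subsection~\ref{permclose}, to showing that each $L_\tau$ is regular; the core is a ``checkpoint-guessing'' construction. Fix a deterministic finite automaton $M=(Q,\Sigma,\delta,q_0,F)$ accepting $L$. Since the regular languages are closed under finite union and $\sigma(L)=\bigcup_\tau L_\tau$ as $\tau$ ranges over the subpatterns of $\sigma$, it suffices to prove that each $L_\tau$ (with $\tau\in S_\ell$, $\ell\le k$) is regular.

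The key observation is that a word $w_1\cdots w_\ell\in L$ with all $w_i\neq\e$ corresponds to an accepting run of $M$ visiting a sequence of states $q_0=p_0,p_1,\dots,p_\ell$ with $p_\ell\in F$, where reading the block $w_i$ drives $M$ from $p_{i-1}$ to $p_i$. As $Q$ is finite and $\ell\le k$ is fixed, there are only finitely many such tuples $\bar p=(p_0,\dots,p_\ell)$. The point that shapes the whole construction is that in the rearranged word $w_{\tau(1)}\cdots w_{\tau(\ell)}$ the blocks occur in a scrambled order, yet each $p_i$ must simultaneously be the exit state of block $i$ and the entry state of block $i+1$ in the \emph{original} order, and these two constraints are typically enforced at distant positions of the new word. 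I therefore guess the entire tuple $\bar p$ in advance rather than locally.

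For each $\bar p$ with $p_0=q_0$ and $p_\ell\in F$ I would build an NFA $N_{\bar p}$ made of $\ell$ copies of $Q$ (``phases''), where phase $j$ handles the block $w_{\tau(j)}$: it begins at the state $p_{\tau(j)-1}$, runs the ordinary transitions of $M$, and is exited into phase $j+1$ by an $\e$-transition that is enabled only when the current state is $p_{\tau(j)}$, thus certifying that $w_{\tau(j)}$ takes $M$ from $p_{\tau(j)-1}$ to $p_{\tau(j)}$. The start state is $p_{\tau(1)-1}$ in phase $1$, and the unique accept state is $p_{\tau(\ell)}$ in phase $\ell$. Then $L_\tau=\bigcup_{\bar p}L(N_{\bar p})$ is a finite union of regular languages, so $L_\tau$, and hence $\sigma(L)$, is regular.

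The main obstacle is the correctness argument for $N_{\bar p}$: one must verify that the gated $\e$-transitions force the nondeterministically guessed block boundaries to factor the input exactly as $w_{\tau(1)}\cdots w_{\tau(\ell)}$, with each factor consistent with the globally fixed tuple $\bar p$, so that $N_{\bar p}$ has an accepting run on $w_{\tau(1)}\cdots w_{\tau(\ell)}$ precisely when the associated $w_1\cdots w_\ell$ belongs to $L$. This is routine in substance but requires care in tracking the indexing through the permutation $\tau$ and in checking that distinct phases never share or corrupt the guessed checkpoints.
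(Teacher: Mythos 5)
Your proposal is correct and is essentially the paper's own argument: both fix a tuple of intermediate ``checkpoint'' states, observe that for each such tuple the permuted language is a concatenation of languages of the form ``words driving $M$ from $p$ to $q$'', and take a finite union over all tuples (your phase-chained NFA is just the explicit automaton for that concatenation). The only cosmetic differences are that the paper applies this directly to $\sigma(L)$ with $k$ blocks, some possibly empty, rather than detouring through the subpatterns $\tau$ and the languages $L_\tau$.
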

\begin{proof}
Assume $L$ is the language of a finite state automaton $M$ with start state $q_{\mathrm{start}}$ and single accept state $q_{\mathrm{accept}}$, and  $\sigma\in S_k$.
For each $(k-1)$-tuple of states $\mathbf q=(q_{j_1},\dots, q_{j_{k-1}})$, define  $k$ automata $M^{\mathbf q}_1,\dots, M^{\mathbf q}_k$ as follows. 
Let $M^{\mathbf q}_1$ be a copy of $M$ with start state $q_{\mathrm{start}}$ and accept state $q_{j_1}$. For $1\leq s< k-1$ let $M^{\mathbf q}_{s+1} $ be a copy of $M$ with start state $q_{j_s}$ and accept state $q_{j_{s+1}}$. 
Let $M^{\mathbf q}_k$ be a copy of $M$ with start state $q_{j_{k-1}}$ and accept state $q_{\mathrm{accept}}$.

Define $\overline{M^{\mathbf q}_i}$ to be the language accepted by the automaton ${M}^{\mathbf q}_i$, and let $L_{\mathbf q}$ be the concatenation
$$\overline {M^{\mathbf q}_1} \  \overline {M^{\mathbf q}_2}\dots \overline {M^{\mathbf q}_k}.$$

 Then $L_{\mathbf q}$ accepts precisely the words in  $L$ that label a path in $M$ from $q_{\mathrm{start}}$ to $q_{\mathrm{accept}}$ that passes the intermediate states from $\mathbf q$.
 It follows that $L=\bigcup_{\mathbf q} L_{\mathbf q}$. 
 Now define 
$L_{\mathbf q}^\sigma$ to be the concatenation $$\overline {M^{\mathbf q}_{\sigma(1)}} \ \overline {M^{\mathbf q}_{\sigma(2)}}\dots \overline {M^{\mathbf q}_{\sigma(k)}}.$$  Then $w\in L_{\mathbf q}^\sigma$ if and only if $w=w_{i_1}\dots w_{i_k}$ and $$w_{\sigma^{-1}(i_1)}\dots w_{\sigma^{-1}(i_k)}\in L_{\mathbf q}.$$  It follows that $\sigma(L)=\bigcup_{\mathbf q} L_{\mathbf q}^\sigma$.
\end{proof}

Maslov and independently Oshiba \cite{\Maslov, \Oshiba}
proved that the cyclic closure of a context-free language is context-free.
A sketch of a proof of this fact is given in the solution to Exercise 6.4 (c) in \cite{MR645539},
and we generalise the approach taken there to show that the class of indexed languages is also closed 
under the cyclic closure operation.

\begin{thm}\label{thm:firstmain}
If $L$ is indexed, then $\cyc(L)$ is indexed. 
\end{thm}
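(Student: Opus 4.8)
The plan is to start from an indexed grammar $G=(\mathcal N,\mathcal T,\mathcal I,\mathcal P,\s)$ in normal form generating $L$, and to build from it a new indexed grammar $G'$ with $L(G')=\cyc(L)$. The conceptual starting point is a parse-tree description of membership: a word lies in $\cyc(L)$ exactly when it can be written $w_2w_1$, where some parse tree $P$ of $w_1w_2\in L$ has been cut at the position separating $w_1$ from $w_2$. I would record this cut by the path $\pi$ in $P$ running from the root down to the leaf immediately to the right of the cut. This path is the frame of an \emph{edge-skeleton} of $P$ in the sense defined above, and the subtrees of $P$ hanging off $\pi$ on its left have yields concatenating (in order) to $w_1$, while those hanging on its right, together with the leaf at the foot of $\pi$, concatenate to $w_2$. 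Producing $\cyc(L)$ thus amounts to producing, for every such $P$ and every admissible cut, the rearranged yield $w_2w_1$.

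The grammar $G'$ would simulate the spine $\pi$ with a fresh family of marked nonterminals, while regenerating each hanging subtree using the original unmarked nonterminals of $G$. Descending the marked spine reproduces the productions of $G$ used along $\pi$: a branching production $\A\ra\B\C$ decides at each step whether the spine continues into the left or the right child, and hence whether the sibling is a left hang (destined for $w_1$) or a right hang (destined for $w_2$); the flag productions $\A\ra\B^f$ and $\A^f\ra\B$ are copied so that the index string carried along $\pi$ in $G'$ agrees with the one carried in $P$. The spine terminates at the cut, where a single production of $G$ is split into the part lying left of the cut (emitted into $w_1$) and the part lying right of it (emitted into $w_2$). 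It remains to emit all the hangs in the correct positions, with all of $w_2$ preceding all of $w_1$.

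The \textbf{main obstacle} is exactly this reordering, compounded by flag-correctness. Along $\pi$ the contributions to $w_1$ and to $w_2$ are interleaved, and moreover the right hangs are met along $\pi$ in the reverse of the order in which they occur within $w_2$, whereas the left hangs occur in $w_1$ in the order in which $\pi$ meets them; so producing $w_2w_1$ requires reversing one family while leaving the other in place, with each new spine step effectively inserting its hang at the migrating boundary between the accumulated $w_2$- and $w_1$-portions. Such a middle insertion cannot be realised by a single top-down recursion keeping a contiguous yield, so I expect to generate the two portions as two coordinated threads rather than one. Here the extra power of indexed grammars over context-free ones is essential: at a branching production the entire index string is copied to \emph{both} children, so the flag stack can carry, from the spine down into the two threads, the information needed to keep them synchronised. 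This guarantees both that the two halves come from one and the same parse tree of $G$ and that each hanging subtree is regenerated under exactly the (unbounded) index string $\gamma\$$ it carried in $P$. Maintaining this push/pop discipline in the rearranged spine, and preventing the two threads from desynchronising, is the technical heart of the argument; the normal form helps, since every nonterminal along the spine carries a flag of the shape $\gamma\$$ and the end-of-flag symbol $\$$ signals precisely the boundary cases.

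Finally I would verify $L(G')=\cyc(L)$ by the two inclusions, exhibiting an explicit correspondence: every edge-skeleton decomposition of a $G$-parse-tree of $w_1w_2$ is mapped to a $G'$-derivation of $w_2w_1$, and conversely every derivation in $G'$ parses back into such a decomposition. Since indexed languages are closed under finite union, one may also assemble $G'$ from separate pieces handling the different shapes a cut may take at the foot of the spine, which keeps the case analysis in the construction and its verification manageable.
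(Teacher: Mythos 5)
Your setup agrees with the paper's: both arguments fix the parse tree of $w_1w_2$, take the spine $\pi$ from the root to the leaf carrying the first letter of $w_2$, observe that the left hangs yield $w_1$ and the right hangs (plus that leaf) yield $w_2$, and aim to emit $w_2w_1$. But the one step you flag as ``the technical heart'' --- actually realising the reordering inside an indexed grammar --- is exactly the step you leave unresolved, and the route you sketch for it has a real gap. You assert that the interleaving of left and right hangs along $\pi$, with the right hangs needing reversal, ``cannot be realised by a single top-down recursion keeping a contiguous yield,'' and you therefore propose two coordinated threads synchronised through the copied flag stack. Two problems. First, the flag stack at a branching production is copied identically to both children, so it can transmit the index string $\gamma\$$, but it cannot by itself make two independent threads agree on the \emph{unbounded sequence of production choices and branch directions along $\pi$}; since the hangs destined for $w_2$ and for $w_1$ alternate along the spine, each thread would have to retraverse the entire spine and select a complementary subset of hangs, which requires encoding the whole spine into the flag (as the paper does, much more laboriously, for Proposition~\ref{prop:tau}) --- a mechanism you do not supply. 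Second, and more to the point, the premise is false: a single recursion does suffice once you run the spine \emph{bottom-up} instead of top-down.

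That reversal is the paper's key idea, and it is what your proposal is missing. The new grammar starts at the leaf $a_k$, nondeterministically guesses the flag $u$ of the nonterminal $\A$ directly above it (via $\tilde{\s}\ra\tilde{\s}^f$ and then $\tilde{\s}\ra a\hat{\A}$ for each $\A\ra a$), and then climbs $\pi$ using ``hatted'' reversals of the original productions: for $\A\ra\B\C$ one adds $\hat{\B}\ra\C\hat{\A}$ and $\hat{\C}\ra\hat{\A}\B$, and for the flag productions one adds $\hat{\B}^f\ra\hat{\A}$ and $\hat{\B}\ra\hat{\A}^f$. Each step up the spine wraps new material \emph{around} the single migrating hatted symbol --- right hangs accumulate on its left, left hangs on its right --- so the invariant is $\hat{\A}^{w_i}_i \Ra \A^{w_{i+1}}_{i+1}\cdots\A^{w_n}_n\,\hat{\A}^{w}\,\A^{w_1}_1\cdots\A^{w_{i-1}}_{i-1}$, which is a perfectly contiguous-yield, single-thread derivation; your ``middle insertion'' is really an outward expansion about one nonterminal. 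Correctness of the guessed flag is enforced for free because the hat can only be removed by $\hat{\s}^{\$}\ra\e$, so the derivation terminates only if the guessed flag pops and pushes consistently all the way back to $\s^{\$}$. Without this reversal (or a fully worked-out flag-encoding substitute), your proof does not go through as written.
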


\begin{proof}
The idea of the proof is to take the parse-tree of a derivation of $w_1w_2\in L$ in $\Gamma$ and ``turn it upside down",
using the leaf corresponding to the first letter of the word $w_2$ as the new start vertex. 

Let $\Gamma = (\mathcal N, \mathcal X, \mathcal I, \mathcal P, \s)$ be an indexed grammar for $L$ in normal form.
If $w = a_1\ldots a_n\in L$ with $a_i\in \mathcal X$ and we wish to generate the cyclic permutation $a_k\ldots a_n a_1\ldots a_{k-1}$ of $w$, 
take some parse tree for $w$ in $\Gamma$ and draw the unique path $F$ from the start symbol $\s^\$$ to $a_k$.
Consider the edge-skeleton of this parse tree with frame $F$.

\begin{figure}[h!]   \tiny

\begin{tikzpicture}[scale=.7]
\draw[ultra thick,decorate]  (2,7) -- (4,5) -- (1,2)--(3,0);

\draw[decorate] (2,7) -- (1,6);
\draw[decorate] (4,5) -- (5,4);
\draw[decorate] (3,4) -- (4,3);
\draw[decorate] (1,2) -- (0,1);

\draw (3,6) node {$\bullet$};
\draw (2,3) node {$\bullet$};
\draw (2,1) node {$\bullet$};

\draw (2.1,7.4) node {$\s^\$$};
\draw (0.7,6.2) node {$\A_1^\$$};
\draw (3.3,6.2) node {$\B_1^\$$};
\draw (4.4,5.2) node {$\B_2^{f\$}$};
\draw (5.3,4.2) node {$\A_4^{f\$}$};
\draw (2.7,4.2) node {$\B_3^{f\$}$};
\draw (4.3,3.2) node {$\A_3^{f\$}$};
\draw (1.7,3.2) node {$\B_4^{f\$}$};
\draw (0.7,2.2) node {$\B_5^{gf\$}$};
\draw (-.3,1.2) node {$\A_2^{gf\$}$};
\draw (2.5,1.2) node {$\B_6^{gf\$}$};
\draw (3,-.2) node {$a_k$};

\draw (3,-2.37) node{$\;$};

\draw (7.5,4) node{\large$\leadsto$};
\end{tikzpicture}
\hspace{3mm}
\begin{tikzpicture}[scale=.9]
\draw[ultra thick, decorate] (1,0) -- (1,7);

\draw[decorate] (1,7) -- (0.2,6.2);
\draw[decorate] (1,5) -- (1.8,4.2);
\draw[decorate] (1,4) -- (1.8,3.2);
\draw[decorate] (1,2) -- (0.2,1.2);

\draw (1,6) node {$\bullet$};
\draw (1,3) node {$\bullet$};
\draw (1,1) node {$\bullet$};

\draw (1.1,7.3) node {$\s^\$$};
\draw (0,6.4) node {$\A_1^\$$};
\draw (1.3,6.2) node {$\B_1^\$$};
\draw (0.65,5.2) node {$\B_2^{f\$}$};
\draw (2,4.2) node {$\A_4^{f\$}$};
\draw (0.65,4.2) node {$\B_3^{f\$}$};
\draw (2 ,3.2) node {$\A_3^{f\$}$};
\draw (0.65,3.2) node {$\B_4^{f\$}$};
\draw (1.4,2.2) node {$\B_5^{gf\$}$};
\draw (-.15,1.2) node {$\A_2^{gf\$}$};
\draw (1.4,1.2) node {$\B_6^{gf\$}$};
\draw (1,-.2) node {$a_k$};

\end{tikzpicture}

\caption{Edge-skeleton  in an indexed grammar. 
The right-hand version is the same tree as the left-hand one, but `straightened' along the path from $\s^\$ $ to $a_k$.
\label{fig:edgeskeleton}}
\end{figure}
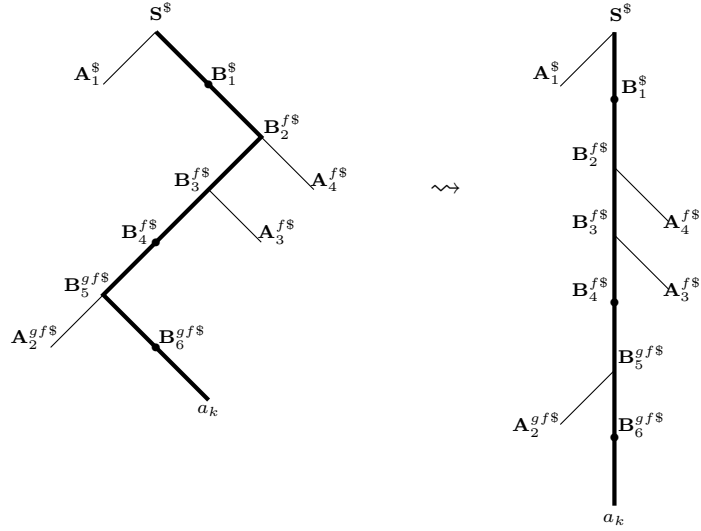

In the example given in Figure~\ref{fig:edgeskeleton},
the desired word $a_k\ldots a_n a_1\ldots a_{k-1}$ can be derived from the 
string $a_k \A_3^{f\$} \A_4^{f\$} \A_1^\$ \A_2^{gf\$}$, using productions in $\mathcal P$.  

Therefore we wish to enlarge the grammar to generate all strings $$a_k \A_{k+1}^{w_{k+1}}\dots \A_n^{w_n} \A_1^{w_1} \ldots \A_{k-1}^{w_{k-1}},$$
where $\A_1^{w_1},\ldots,\A_{k-1}^{w_{k-1}}$ are the labels of the vertices lying immediately to the left of $F$ (in top to bottom order), 
and $\A_{k+1}^{w_{k+1}},\ldots,\A_n^{w_n}$ are the labels of the vertices lying immediately to the right of $F$ (in bottom to top order).
We do this by introducing new `hatted' nonterminals, with which we label all the vertices along the path $F$,
and new productions which are the reverse of the old productions `with hats on'.  By first nondeterministically guessing
the flag on the nonterminal immediately preceding $a_k$, we are able to essentially generate the edge-skeleton in reverse.

The grammar for $\cyc(L)$ is given by $(\mathcal N', \mathcal T', \mathcal I', \mathcal P', \mathbf{S}_0)$, 
where $\mathcal T'=\mathcal T$, $\mathcal I'=\mathcal I$, $\s_0$ is the new start symbol, and $\mathcal N'$ and $\mathcal P'$ are as follows.
Let $\hat{\mathcal N}$ be the set of symbols obtained from $\mathcal N$ by placing a hat on them. 
Then $\mathcal N' = \mathcal N\cup \hat{\mathcal N}\cup\{\mathbf{S}_0, \tilde{\s}\}$ is the new set of nonterminals.

The productions $\mathcal P'$ are obtained as follows:
\begin{itemize}
\item keep all the old productions from $\mathcal P$.
\item add productions $\s_0 \ra \s$, $\s_0 \ra \tilde{\s}$, $\hat{\s}^{\$}\ra \e$ 
\item for each $f\in \mathcal I$, add a production $\tilde{\s} \ra \tilde{\s}^f$
\item for each production $\A \rightarrow a$ in $\mathcal P$, add a production $\tilde{\s} \rightarrow a\hat{\A}$

\item for each production $\mathbf A\ra \mathbf B^f$ in $\mathcal P$, add a production
$\hat{\mathbf B}^f \ra \hat{\mathbf A}$

\item for each production $\A^f \ra \B$ in $\mathcal P$, add a production 
$\hat{\B} \ra \hat{\A}^f$

\item for each production $\A \ra \B\C$ in $\mathcal P$, add productions
$\hat{\B} \ra \C \hat{\A}$ and $\hat{\C} \ra \hat{\A} \B$.
\end{itemize}

Note that the new grammar is no longer in normal form.
Also note that the only way to remove the hat symbol is to apply the production $\hat{\s}^{\$}\ra \e$. 

We will show by induction that in this new grammar, \begin{equation}\A^w \Ra \A^{w_1}_1\ldots  \A^{w_i}_i \ldots \A^{w_n}_n \end{equation} if and only if 
 \begin{equation}\hat{\A}^{w_i}_i \Ra \A^{w_{i+1}}_{i+1} \ldots \A^{w_n}_n \hat{\A}^w \A^{w_1}_1 \ldots \A^{w_{i-1}}_{i-1}  \end{equation} for all $1\leq i\leq n$.

To see why this will suffice, suppose first that \[ \s^{\$} \Ra \A^{w_1}_1\ldots \A^{w_{i-1}}  \A^{w_i}_i \A^{w_{i+1}}_{i+1} \ldots \A^{w_n}_n  \ra \A^{w_1}_1\ldots \A^{w_{i-1}} a \A^{w_{i+1}}_{i+1} \ldots \A^{w_n}_n \] in the original grammar. 
So $\A_i\ra a $ is in $\mathcal P$.
Then in the new grammar
\begin{align*}
 \s_0^{\$} \Ra \tilde{\s}^{w_i}  \ra a \hat{\A}^{w_i}_i & \Ra a \A^{w_{i+1}}_{i+1} \ldots \A^{w_n}_n \hat{\s}^{\$} \A^{w_1}_1 \ldots \A^{w_{i-1}}_{i-1} \\
&  \ra a \A^{w_{i+1}}_{i+1} \ldots \A^{w_n}_n \A^{w_1}_1 \ldots \A^{w_{i-1}}_{i-1},
\end{align*}
hence every cyclic permutation of a word in $L$ is in the new language.

Conversely, suppose $\s_0^{\$} \Ra a \B^{v_1}_1 \ldots \B^{v_n}_n$ and that this subderivation does not start with  $\s_0^{\$} \ra \s^{\$}$.
Then the subderivation begins with $\s_0^{\$} \ra \tilde{\s}^{\$} \Ra \tilde{\s}^u \ra a \hat{\A}^u$ for some $u\in {\mathcal I}^*$, $\A\in \mathcal N$.
Once a `hatted' symbol has been introduced, the only way to get rid of the hat is via the production $\hat{\s}^{\$} \ra \e$.
Hence we must have $\hat{\A}^u \Ra \B^{v_1}_1 \ldots \B^{v_j}_j \hat{\s}^{\$}  \B^{v_{j+1}}_{j+1} \ldots \B^{v_n}_n$ for some ${0\leq j\leq n}$
(with the subword before or after $\hat{\s}$ being empty if $j=0$ or $j=n$ respectively).

But then 
\begin{align*}
\s^{\$} & \Ra \B^{v_{j+1}}_{j+1} \ldots \B^{v_n}_n \A^u \B^{v_1}_1 \ldots \B^{v_j}_j \\
& \ra \B^{v_{j+1}}_{j+1} \ldots \B^{v_n}_n a \B^{v_1}_1 \ldots \B^{v_j}_j 
\end{align*}
and so if a word is produced by the new grammar, some permutation of that word is in $L$.

We finish by giving the inductive proof of the equivalence of (1) and (2).
For the base case, we have $\A^w \Ra \B^u \C^v$ if and only if at some point in the parse tree, 
there is a production $\X^t \ra \Y^t \Z^t$, 
with $\A^w \Ra \X^t$, $\Y^t \Ra \B^u$ and $\Z^t \Ra \C^v$.  The productions in these last three subderivations are all of the form
$\mathbf{D} \ra \mathbf{E}^f$ or $\mathbf{D}^f \ra \mathbf{E}$, so they are equivalent to 
$\hat{\X}^t \Ra \hat{\A}^w$, $\hat{\B}^u \Ra \hat{\Y}^t$ and $\hat{\C}^v \Ra \hat{\Z}^t$.  
Also $\X \ra \Y \Z$ if and only if $\hat{\Y} \ra \Z \hat{\X}$ and $\hat{\Z} \ra \hat{\X} \Y$.
Putting these together, we have $\A^w \Ra \B^u \C^v$ if and only if
\[ \hat{\B}^u \Ra \hat{\Y}^t \ra \Z^t \hat{\X}^t \Ra \C^v \hat{\A}^w \]
and
\[ \hat{\C}^v \Ra \hat{\Z}^t \ra \hat{\X}^t \Y^t \Ra \hat{\A}^w \B^u, \]
as required.

Now for $k>2$, suppose our statement is true for $n<k$.
Then $\A^w \Ra \A^{w_1}_1 \A^{w_2}_2 \ldots \A^{w_k}_k$ if and only if for each $1\leq i\leq k$
there are $\X_i, \Y_i, \Z_i\in {\mathcal N}$ and $t\in {\mathcal I}^*$ such that $\X_i\ra \Y_i \Z_i$ and 
for some $1\leq j\leq k$ either
\[ \A^w \Ra \A^{w_1}_1 \ldots \A^{w_{i-1}}_{i-1} \X_i^t \A^{w_j}_j \ldots \A^{w_k}_k, \]
with $\Y_i^t \Ra A_i^{w_i}$ and $\Z_i^t \Ra \A_{i+1}^{w_{i+1}} \ldots \A_{j-1}^{w_{j-1}}$, or
\[ \A^w \Ra \A^{w_1}_1 \ldots A^{w_j}_j \X_i^t \A^{w_{i+1}}_{i+1} \ldots \A^{w_k}_k, \]
with $\Y_i^t \Ra \A_{j+1}^{w_{j+1}} \ldots \A_{i-1}^{w_{i-1}}$ and $\Z_i^t \Ra \A_i^{w_i}$.

We will consider only the second of these, as it is the slightly more complicated one and the first is 
very similar.  
The right hand side of the displayed subderivation has fewer than $k$ terms, so by our assumption, this
subderivation is valid if and only if 
\[ \hat{\X}_i^t \Ra \A^{w_{j+1}}_{j+1} \ldots \A^{w_k}_k \hat{\A}^w \A^{w_1}_1 \ldots \A^{w_{i-1}}_{i-1}. \]
But this, together with $\Y_i^t \Ra \A_{j+1}^{w_{j+1}} \ldots \A_{i-1}^{w_{i-1}}$ and $\Z_i^t \Ra \A_i^{w_i}$,
is equivalent to
\[ \hat{\A}^{w_i}_i \Ra \hat{\Y}^t \ra \Z^t \hat{\X}^t \Ra \A^{w_{i+1}}_{i+1} \ldots \A^{w_k}_k \hat{\A}^w \A^{w_1}_1 \ldots \A^{w_{i-1}}_{i-1}. \]
\end{proof}

Next, we now show that when $L$ is context-free,  $L_\sigma$ is indexed. 
Since Brandst\"adt proved that the class of context-free languages is not closed under $C^k$ for all $k\geq 3$, and 
$$C^k(L) = \bigcup_{\sigma\in S_k} \sigma(L) =  \bigcup_{1\leq \ell\leq k}\bigcup_{\sigma\in S_\ell} L_\sigma,$$
we have that for all $k\geq 3$ there exist permutations $\sigma \in S_k$
such that $L_\sigma$ are not context-free for some context-free language $L$. 

\begin{prop}\label{prop:tau}
Let $\tau\in S_\ell$ be a permutation.  If $L$ is context-free, then $L_\tau$ is indexed.
\end{prop}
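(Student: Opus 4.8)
The plan is to generalise the parse-tree surgery of Theorem~\ref{thm:firstmain} from a single path to a branching \emph{frame}. Fix a context-free grammar $G$ for $L$ in Chomsky normal form, regarded as a flagless indexed grammar, so that every parse tree is binary. Given a decomposition $w=w_1\cdots w_\ell\in L$ with each $w_i\neq\e$, the $\ell$ pieces are separated by $\ell-1$ cut points, and I would let the frame $F$ be the union of the root-to-leaf paths to the leaves bordering these cut points. Then $F$ has at most $2(\ell-1)$ leaves, so its shape $\hat F=T$ ranges over only finitely many tree-shapes, as $\ell$ is fixed. Because indexed languages are closed under finite union, it suffices to fix one tree-shape $T$ and construct an indexed grammar generating all words $w_{\tau(1)}\cdots w_{\tau(\ell)}$ coming from parse trees whose frame has shape $T$.

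With $T$ fixed, each edge of $T$ is a maximal non-branching path of $F$. No cut point lies strictly inside such a path, so the subtrees hanging off each of its two sides form a contiguous block of leaves containing no cut point, and hence lie within a single piece. The grammar I would build walks the frame, visiting these hanging regions in the order dictated by $\tau$ and emitting the corresponding pieces. The essential move is that when the traversal reaches an edge whose two sides belong to pieces that $\tau$ places far apart, it must emit one side immediately while \emph{postponing} the other, whose subtrees are unboundedly many. Here the flags do the work: travelling along the edge I would push the roots of the postponed subtrees onto the flag of the frame nonterminal, and later pop them to regenerate those subtrees in the correct order using hatted nonterminals and reversed productions exactly as in Theorem~\ref{thm:firstmain}. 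This is precisely the step that forces $L_\tau$ out of the context-free class and into the indexed one. Correctness would again be phrased as an inductive equivalence between forward subderivations of $G$ on the hangers and their reversed, hatted counterparts.

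The main obstacle is the bookkeeping needed to coordinate these postponements across a frame that genuinely branches. For $\ell=2$ the frame is an edge and there is a single turning point, recovering the cyclic-closure picture; for $\ell\geq 3$ the traversal enters and leaves branch points repeatedly and may be mid-storage on several pieces at once, so the stored data cannot be held in finite state and a naive single global stack can violate the last-in-first-out constraint. The way out is to exploit that each nonterminal of an indexed grammar carries its own flag, which a production $\A\ra\B\C$ copies to both children, so distinct pieces can be recorded on the flags travelling along their own branches of the frame rather than on one shared stack. Making this discipline precise for an arbitrary fixed tree-shape $T$, and verifying by induction on the number of leaves of $T$ that the grammar produces exactly the $\tau$-reordered words arising from frames of shape $T$, is the technical heart of the argument; the finitely many tree-shapes are then combined by union into a single indexed grammar for $L_\tau$.
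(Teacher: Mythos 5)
Your setup matches the paper's: reduce to a fixed tree-shape $T$ via a finite union, take the frame to be the union of the root-to-$x_i$ paths (the paper uses the $\ell-1$ leaves given by the first letters of $w_2,\dots,w_\ell$ rather than your $2(\ell-1)$, but that is cosmetic), and lean on the fact that an $\A\ra\B\C$ production copies the flag to both children. But the step you defer as ``the technical heart'' is exactly where the proof lives, and the mechanism you sketch for it does not work. A piece $w_j$ corresponds to a contiguous block of edge-sides of the outline of $T$, and the target word $w_{\tau(1)}\cdots w_{\tau(\ell)}$ requires these blocks to appear in an arbitrary permuted order. A traversal of the frame that stores each branch's hanging material only on the flags travelling along that branch makes that material available only to descendants of that branch; if $\tau$ forces material from branch $A$ to appear, then material from branch $B$, then material from branch $A$ again (non-nested interleaving, which is precisely the situation for $\ell\geq 3$), then branch $A$'s data must be present at two separated nonterminals of the sentential form, and since flags are inherited only from ancestors this forces their common ancestor --- ultimately the root --- to already carry everything. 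So ``per-branch storage'' collapses: you are forced to serialise the \emph{entire} skeleton into a single flag before any reordering can happen, and your proposal never takes that step.

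That serialisation is the paper's actual construction: the grammar first generates the whole potential $T$-skeleton edge by edge into one flag, with delimiters $\#_i$ separating the data of each edge-skeleton and symbols $A_L,A_R,A_\alpha,A_\omega$ recording the off-frame vertices and endpoints; it then spawns checker nonterminals $\X_i$ (one per branch point) that scan the flag to verify a production $\A\ra\B\C$ connects each parent edge to its two children; and finally a single production $\ol\M\ra\ol\M_{\tau(1)1}\cdots\ol\M_{\tau(\ell)k_\ell}$ lays out one reader per outline segment \emph{in the $\tau$-permuted order}, each holding a full copy of the flag, each skipping to its own delimiter and unpacking its edge-side. Note also that no ``hatted reversed productions'' are needed here: unlike Theorem~\ref{thm:firstmain}, the frame is never traversed bottom-up --- the stored symbols are re-emitted as fresh nonterminals $\tilde\A$ that simply derive forwards with the original context-free productions once their flag is exhausted. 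Your proposal contains the right raw ingredients but is missing the broadcast-and-permuted-readers idea that actually realises $\tau$, so as written it is a plan with its central step unfilled rather than a proof.
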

\begin{proof}
The proof is based partly on a similar idea to the proof of Theorem~\ref{thm:firstmain}, except that since we 
are splitting our words up into $\ell$ subwords rather than only two, we need to consider 
skeletons with frames having more complicated shapes than just a single edge.

For $w = w_1\ldots w_\ell\in L$ with all $w_i$ non-empty, let $x_i$ be the first symbol of $w_i$,
and consider a parse tree skeleton with frame consisting of the unique paths from the start symbol
$\s^\$$ to each $x_i$ for $2\leq i\leq \ell$ (these paths will generally overlap).  An example is shown in Figure~\ref{fig:Ltau_skeleton}.

\begin{figure}[h!]\tiny
 \begin{tikzpicture}[scale=.5]             
\draw[decorate] (0,0) -- (6,6)--(6,8);
\draw[decorate] (3,3) -- (7,-1);
\draw[decorate] (6,6) -- (8,4);

\draw[decorate] (6,7) -- (6.7,6.3);
\draw[decorate] (7,5) -- (6.3,4.3);
\draw[decorate] (5,5) -- (4,4.8);
\draw[decorate] (4,4) -- (4.7,3.3);
\draw[decorate] (2,2) -- (1,1.8);
\draw[decorate] (1,1) -- (0,.8);
\draw[decorate] (4,2)--(5,1.8);
\draw[decorate] (5,1) -- (4.3,.3);
\draw[decorate] (6,0) -- (7,-.2);

\draw (6.2,8.4) node {$\s^\$$};
\draw (0,-.2) node {$x_2$};
\draw (7.3,-1.2) node {$x_3$};
\draw (8.3,3.8) node {$x_4$};

\end{tikzpicture}
\caption{Skeleton for $L_\tau$.
\label{fig:Ltau_skeleton}}
\end{figure}
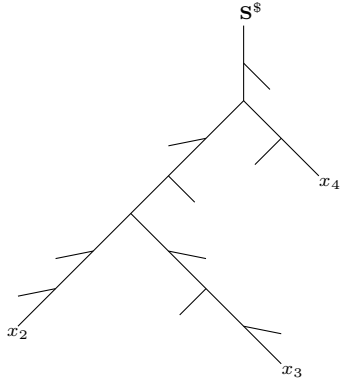

A skeleton $B$ defines a sublanguage of $L$ -- namely all those words which can be generated by completing $B$ into a full 
parse tree -- as well as a fixed $\ell$-partition of words in this language.
Let $L(B)$ be the set of all `partitioned words' $w_1|\ldots|w_\ell$ (where $w_1 \ldots w_n\in L$) generated from the skeleton $B$.
Let $L_\tau(B)$ be the set of all words $w_{\tau(1)}\ldots w_{\tau(\ell)}$ such that $w_1|\ldots|w_\ell\in L(B)$.
Then $L_\tau$ is the union of all the (infinitely many) languages $L_\tau(B)$.

Our grammar for $L_\tau$ will be based on constructing all possible
$(\ell-1)$-leaved skeletons in $L$.  There are finitely many possible shapes for the frames of these skeletons,
and we will construct one grammar for each tree-shape with $\ell-1$ leaves.

We shall abuse standard terminology somewhat, by referring to an edge as a {\em leaf} if one of its vertices 
is a leaf, and talking about {\em parent, child} and {\em sibling} edges, the usual convention being to use these terms
for vertices.

Let $T$ be a tree-shape with $\ell - 1$ leaves.  We construct a grammar $\Gamma_T$ that will produce 
$\bigcup_{B} L_\tau(B)$, where $B$ ranges over all $T$-skeletons for words in $L$.
First place an ordering $e_1,e_2,\ldots,e_N$ on the set $E(T)$ of edges of $T$ (where $N=2\ell-3$), 
such that $e_1$ is the edge with one vertex at the root, 
and for some $1\leq m\leq N$ the edges $e_1,\ldots,e_m$ are all non-leaves, while the remaining edges are leaves.
Choose the ordering such that a parent edge always comes before its children, and a left sibling always comes before its right sibling.
Also fix an ordering of the degree $3$ vertices in $T$, which we refer to as {\em branch points} (there are $\ell-2$ of these).

Let $L$ be generated by a context-free grammar $\Gamma$ in Chomsky normal form with nonterminals $\mathcal N$,
terminals $\mathcal X$ and productions $\mathcal P$.
Our indexed grammar $\Gamma_T$ is defined using the following symbols:
\begin{itemize}
\item terminals $\mathcal X$,
\item flags ${\mathcal F} = \{\$,\#_i,A_\alpha,A_\omega,A_L,A_R,a_\omega \mid 1\leq i\leq N, \A\in \mathcal N, a\in \mathcal X\}$,
\item start symbol $\s_0$ and remaining nonterminals \[\{\A_e\ \mid \A\in {\mathcal N}, e\in E(T)\} \cup U \cup \ol{U},\]
where \[U = \left\{\X_s, \X_{s,B}, \X_{s,BC}, \M, \M_{ij} \mid s, i\in \{1,\ldots \ell\}, j\in \{1,\ldots, k_i\}, \B,\C\in \mathcal N\right\}\]
($k_i$ will be defined later) and $\ol{U} = \{\ol{\C} \mid \C\in U\}$.
\end{itemize}
 
We begin by constructing a potential $T$-skeleton by replacing 
each edge $e$ of $T$ by a valid edge-skeleton $P_e$ in $\Gamma$.  
We will check the consistency of the branch points (degree $3$ vertices of $T$) later.
For now the initial symbols of all the edge-skeletons other than the one for the initial edge $e_1$ (which has initial symbol $\s_0$) 
will be chosen non-deterministically.  We do not need to remember the labels of the vertices on the frames of the edge-skeletons $P_e$,
other than the initial and terminal vertices.  We can therefore create and store all the important information about 
our $T$-skeleton using the following rules.  The subscripts $\alpha$ and $\omega$ denote the beginning and end respectively of a
edge-skeleton $P_e$, while the subscripts $L$ and $R$ record whether a vertex lies to the left or right of its frame.
[Note:  Throughout this proof, we will sometimes use brackets around nonterminals to increase legibility.  The brackets have no 
meaning in the grammar.]

\begin{enumerate}
\item $\s_0 \ra (\s_{e_1})^{S_\alpha}$, 
\item $\A_e \ra (\B_e)^{C_R} \mid (\C_e)^{B_L}$ for each edge $e$ and production $\A\ra \B\C$ in $\mathcal P$,
\item $\A_{e_i} \ra (\B_{e_{i+1}})^{B_\alpha \#_i A_\omega}$ for each $\A,\B\in \mathcal N$ and $1\leq i\leq m$,
\item $\A_{e_i} \ra (\B_{e_{i+1}})^{B_\alpha \#_i a_\omega}$ for each production $\A\ra a$ in $\mathcal P$, $\B\in \mathcal N$
and $m+1\leq i\leq N-1$,
\item $\A_{e_N} \ra (\M)^{\#_N A_\omega}$ for each $\A\in \mathcal N$.
\end{enumerate} 

After a sequence of these productions, terminating with (5), the string produced is a single nonterminal $\M$ with flag
\[ \#_N \omega_N v_N \alpha_N \#_{N-1} \ldots \#_{2} \omega_2 v_2 \alpha_2 \#_{1} \omega_1 v_1 \alpha_1 \$. \]
The section of the flag in between $\#_i$ and $\#_{i-1}$ contains the information about the path-skeleton $P_{e_i}$.
The symbols $\alpha_i$ and $\omega_i$ correspond to the initial and final vertices of $P_{e_i}$ respectively,
and we have $\alpha_i = B_\alpha$ for some nonterminal $\B$ (in particular, $\alpha_1 = S_\alpha$), 
while $\omega_i = A_\omega$ for some nonterminal $\A$
if $1\leq i\leq m$, and $\omega_i = a_\omega$ for some terminal $a$ otherwise (that is, if $e_i$ is a leaf).
The $v_i$ are words in $\{A_L, A_R\mid A\in \mathcal N\}^*$ encoding -- in reverse -- the sequence of vertices off the main path in $P_{e_i}$,
with the subscripts $L$ and $R$ denoting a vertex lying to the left or right of the path respectively.

Having produced a flag corresponding to a potential $T$-skeleton, we now need to check that  
this is indeed a valid $T$-skeleton in $\Gamma$.  The only potential problems are at the branch points.  If $e_p$ is an 
edge with left child $e_q$ and right child $e_r$, then we need to check that $\mathcal P$ contains a production
$\A\rightarrow \B\C$, where $\omega_p = A_\omega$, $\alpha_q = B_\alpha$ and $\alpha_r = C_\alpha$.
In order to do this, we create a `check symbol' $\X_i$ for each branch point.

\begin{enumerate}[resume]
\item $\M\ra  \ol{\M} \X_1 \ldots \X_{\ell-2}.$
\end{enumerate}

Let $e_p$ be the edge coming down into the $i$-th branch point and let $e_q$ and $e_r$ be its left and right children respectively.
Recall that we have chosen our ordering on the edges in such a way that $p<q<r$.
Call a nonterminal $\A$ {\em $f$-ready}, for some flag $f$, if $\A^g \ra \A$ for all $g\in {\mathcal F}\setminus \{f\}$.
Now $\X_i$ checks for a valid production at the $i$-th branch point.
Informally, the idea is that $\X_i$ searches the flag for the symbols $\alpha_r, \alpha_q$ and $\omega_q$ (which will occur in that 
order) and stores $C$ and $B$, where $\alpha_r = C_\alpha$, $\alpha_q = B_\alpha$, finally outputting the empty word if and only 
if $\A\ra \B\C$ is in $\mathcal P$, where $\omega_p = A_p$.  Formally, this is achieved via the following productions, requiring quite a few extra nonterminals:

\begin{enumerate}[resume]
\item $\X_i$ is $\#_r$-ready, with $(\X_i)_{\#_r} \ra \ol{\X}_i$,
\item $\ol{\X}_i$ is ready for the next $\alpha$-subscripted flag (which will be $\alpha_r$), 
and $(\ol{\X}_i)^{C_\alpha} \ra \X_{i,C}$,
\item $\X_{i,C}$ is $\#_q$-ready, with $(\X_{i,C})^{\#_q} \ra \ol{\X}_{i,C}$,
\item $\ol{\X}_{i,C}$ is ready for the next $\alpha$-subscripted flag (which will be $\alpha_q$),
and $(\ol{\X}_{i,C})^{B_\alpha} \ra \X_{i,BC}$,
\item $\X_{i,BC}$ is $\#_p$-ready, with $(\X_{i,BC})^{\#_p}\ra \ol{\X}_{i,BC}$,
\item The next flag will be $\omega_p$, so $\ol{\X}_{i,BC}$ can now finally check the validity of the 
branch point, by having productions $(\ol{\X}_{i,BC})^{A_\omega} \ra \e$ for all productions $\A\ra \B\C$ in $\mathcal P$.
\end{enumerate}

These productions ensure that once the symbols $\X_i$ are introduced, the derivation will never terminate unless the 
potential $T$-skeleton in the flag has valid connections at all branch points, and hence is a valid $T$-skeleton,
in which case all the symbols $\X_i$ produce the empty word.

Finally, we `unpack' the $T$-skeleton from the flag, to produce words $\tau(w)$, where $w$ is a word in $L$ arising from the $T$-skeleton.
Let $e_i^L$ and $e_i^R$ be the left and right hand sides of the edge $e_i$ respectively. 
Consider the `outline' of $T$, which is a directed path $o(T)$ drawn around the outside of $T$, beginning on the left hand side of the root and ending 
on the right hand side of the root, divided into labelled segments corresponding to the edge-sides.

\begin{figure}[h!]\tiny
 \begin{tikzpicture}[scale=.5]
\draw[decorate] (0,0) -- (6,6) -- (6,9);
\draw[decorate] (3,3) -- (6,0);
\draw[decorate] (6,6) -- (9,3);

\draw (5,8) node {$e_1^L$};
\draw (7,8) node {$e_1^R$};
\draw (3.5,5.5) node {$e_2^L$};
\draw (5.5,3.5) node {$e_2^R$};
\draw (.5,2.5) node {$e_3^L$};
\draw (2.5,.5) node {$e_3^R$};
\draw (3.5,.5) node {$e_4^L$};
\draw (6.5,1.5) node {$e_4^R$};
\draw (6.5, 3.5) node {$e_5^L$};
\draw (8.5, 5.5) node {$e_5^R$};

\draw[blue] (5.5,6.5) node {$\bullet$};
\draw[blue] (6.5,6.5) node {$\bullet$};
\draw[blue] (6,5) node {$\bullet$};
\draw[blue] (2.5,3.5) node {$\bullet$};
\draw[blue] (4,3) node {$\bullet$};
\draw[blue] (3,2) node {$\bullet$};
\draw[blue] (-.5,-.5) node {$\bullet$};
\draw[blue] (6.5,-.5) node {$\bullet$};
\draw[blue] (9.5,2.5) node {$\bullet$};

\draw [red, directed] plot [smooth] coordinates {(5.5,9)  (5.5,6.5)};
\draw [red] plot [smooth] coordinates { (5.5,6.5)  (2.5,3.5)};
\draw [red] plot [smooth] coordinates { (2.5,3.5)  (0,1)  (-.5,-.5)  };
\draw [red] plot [smooth] coordinates {  (-.5,-.5)  (1,0) (3,2)};
\draw [red] plot [smooth] coordinates { (3,2) (5,0) (6.5,-.5)  };
\draw [red] plot [smooth] coordinates {(6.5,-.5) (6,1) (4,3)};
\draw [red] plot [smooth] coordinates {(4,3) (6,5) };
\draw [red] plot [smooth] coordinates {(6,5) (8,3)  (9.5,2.5)};
\draw [red] plot [smooth] coordinates { (9.5,2.5)  (9,4) (6.5,6.5)};
\draw [red] plot [smooth] coordinates {  (6.5,6.5)  (6.5,9)};

\end{tikzpicture}
\caption{Outline.
\label{fig:outline}}
\end{figure}

The segment of $o(T)$ lying between the $(i-1)$-th and $i$-th leaves of $T$ corresponds to $w_i$
(here we regard the root as both the $0$-th and $\ell$-th leaf).
For $1\leq i\leq \ell$, let $k_i$ be the length of this segment.
We write $w_i\sim f_1\ldots f_{k_i}$ if the labels on the segment of $o(T)$ corresponding to $w_i$, in order, are $f_1,\ldots,f_{k_i}$.
For $1\leq i\leq \ell$, define $\rho_i:\{1,\ldots,k_i\}\ra \{1,\ldots,N\}$ and $d_{ij}\in \{L,R\}$ such that 
if $w_i\sim f_1\ldots f_{k_i}$, then $f_j = e_{\rho_i(j)}^{d_{ij}}$.
Then if $w$ is a word in $L$ produced from our $T$-skeleton, we have $w = w_1\ldots w_\ell$, where $w_i$ arises from the edge-sides 
$e_{\rho_i(1)}^{d_{i1}}, \ldots, e_{\rho_i(k_i)}^{d_{ik_i}}$ in that order.  Note that $d_{i1}=R$ for all $i\geq 2$.
We prepare to unpack the flag in the correct order to produce $\tau(w)$ as follows:

\begin{enumerate}[resume]
\item $\ol{\M} \ra \ol{\M}_{\tau(1)1} \ldots \ol{\M}_{\tau(1)k_1} \ldots \ol{\M}_{\tau(\ell) 1} \ldots \ol{\M}_{\tau(\ell) k_\ell}$.
\end{enumerate}

And finally, the actual unpacking occurs, using the following productions.

\begin{enumerate}[resume]
\item $\ol{\M}_{ij}$ is $\#_{\rho_i(j)}$-ready, with $(\ol{\M}_{ij})^{\#_{\rho_i(j)}} \ra \M_{ij}$,
\item $(\M_{ij})^{a_\omega} \ra a \M_{ij}$ if $d_{ij} = R$ and $(\M_{ij})^{a_\omega}\ra \e$ if $d_{ij}=L$,
\item $(\M_{ij})^{A_L} \ra \M_{ij} \tilde{\A}$ if $d_{ij} = L$, and $(\M_{ij})^{A_L} \ra \M_{ij}$ if $d_{ij} = R$,
\item $(\M_{ij})^{A_R} \ra \tilde{\A} \M_{ij}$ if $d_{ij} = R$, and $(\M_{ij})^{A_R} \ra \M_{ij}$ if $d_{ij} = L$,
\item $(\M_{ij})^{B_\alpha} \ra \e$ and $(\M_{ij})^{A_\omega} \ra \e$,
\item for all $\A\in \mathcal N$, $\tilde{\A}$ is $\$$-ready and $\tilde{\A}^\$ \ra \A$,
\item all productions in $\mathcal P$.
\end{enumerate}

The productions allow $\ol{\M}_{ij}$ to find the section of the flag corresponding to the edge $e_{\rho_i(j)}$, 
from which the $j$-th segment of $w_i$ arises, and then unpack the relevant parts (i.e. the vertices from the left or right side) 
of that section in the correct direction.  The flag contains the information about each edge-skeleton $P_e$ in reverse order.
For subwords generated by $e^L$, this is the `wrong' order and so we need to unpack the left-hand vertices of $P_e$ 
to the right, while for subwords generated by $e^R$, this is the correct order and so we unpack the right-hand 
vertices of $P_e$ to the left.
A flag $a_\omega$ belongs to the right side of its edge, by our convention for the partition defined by the
tree-shape $T$.  When we reach a symbol $B_\alpha$, we have finished recovering the relevant segment, and so we output $\e$.
Finally, we produce an appropriate subword of a word in $L$ using productions from $\mathcal P$.
Once we have produced a word consisting entirely of terminals, we have $w_{\tau(1)}\ldots w_{\tau(\ell)}$ 
for some partitioned word $w_1 | \ldots | w_\ell$ in $L(B)$, where $B$ is the $T$-skeleton encoded in the flag.
All such words can be produced in this way, and so the language generated by $\Gamma_T$ is indeed the union of all $L_\tau(B)$
with $B$ a $T$-skeleton in $\Gamma$.  Hence $L_\tau$ is the union of finitely many indexed languages and is thus itself indexed.
\end{proof}

Our main result follows immediately. 

\begin{cor}
Let $\sigma\in S_k$ be any permutation.  If $L$ is context-free, then $\sigma(L)$ is indexed.
\end{cor}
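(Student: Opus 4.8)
The plan is to deduce the corollary immediately from Proposition~\ref{prop:tau} together with the reduction recorded in Section~\ref{permclose}. First I would recall that every word in $\sigma(L)$ arises from some $w_1\cdots w_k\in L$ by permuting its factors according to $\sigma$; after discarding the empty factors and relabelling the nonempty ones as $u_1\cdots u_\ell$, the induced rearrangement is a \emph{subpattern} $\tau\in S_\ell$ of $\sigma$. This yields the finite decomposition \[ \sigma(L)=\bigcup_\tau L_\tau, \] where the union runs over the (finitely many) subpatterns $\tau$ of $\sigma$, each lying in some $S_\ell$ with $\ell\le k$.

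Since $L$ is context-free, Proposition~\ref{prop:tau} applies to each term and shows that every $L_\tau$ in this union is indexed. It therefore remains only to invoke closure of the class of indexed languages under finite union. Concretely, given indexed grammars for the finitely many languages $L_\tau$, one takes disjoint copies of their nonterminal and flag sets, adjoins a fresh start symbol $\s_0$, and adds a single production $\s_0\ra\s_\tau$ to the old start symbol $\s_\tau$ of each grammar; the combined indexed grammar generates $\bigcup_\tau L_\tau=\sigma(L)$, so $\sigma(L)$ is indexed.

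Because the substantive construction is already contained in Proposition~\ref{prop:tau}, I do not expect any real obstacle here. The only point needing (routine) justification is the closure of indexed languages under finite union, which is exactly the standard disjoint-union-of-grammars argument sketched above.
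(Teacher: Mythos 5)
Your proof is correct and follows exactly the route the paper intends: the decomposition $\sigma(L)=\bigcup_\tau L_\tau$ over subpatterns $\tau$ is set up in Section~\ref{permclose}, each $L_\tau$ is indexed by Proposition~\ref{prop:tau}, and closure of indexed languages under finite union (via the standard disjoint-union grammar) finishes the argument. The paper simply states that the corollary ``follows immediately,'' leaving implicit precisely the steps you spelled out.
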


\begin{cor}\label{cor:secondmain}
Let $k$ be a positive integer.
If $L$ is context-free, then $C^k(L)$ is indexed (and context-free if $k=1,2$).
\end{cor}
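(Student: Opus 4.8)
The plan is to deduce this immediately from Proposition~\ref{prop:tau} together with the decomposition of the operators $C^k$ into permutation operators recorded in Section~\ref{permclose}. Recall that for any language $L$ and any $k\in\N$ we have
\[ C^k(L) = \bigcup_{1\leq \ell\leq k}\bigcup_{\sigma\in S_\ell} L_\sigma. \]
This is a \emph{finite} union, since $\sum_{\ell=1}^{k} |S_\ell| = \sum_{\ell=1}^{k}\ell!$ is finite. By Proposition~\ref{prop:tau}, each $L_\sigma$ occurring in this union is indexed whenever $L$ is context-free, so $C^k(L)$ is a finite union of indexed languages.

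The only remaining ingredient I would invoke is that the class of indexed languages is closed under finite union. This is standard, and easy to verify directly: given indexed grammars for finitely many languages, one forms the disjoint union of their nonterminal, terminal and index sets, introduces a single fresh start symbol $\s_0$, and adds one production $\s_0 \ra \s_i$ for each old start symbol $\s_i$. Any derivation in the combined grammar must first commit to one of the component grammars and then proceed entirely within it, so the language generated is exactly the union of the component languages. Applying this to the finite union displayed above shows that $C^k(L)$ is indexed.

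For the parenthetical claim I would treat $k=1$ and $k=2$ separately. When $k=1$ the only element of $S_1$ is the identity, so $C^1(L)=L$, which is context-free by hypothesis. When $k=2$ the decomposition specialises to $C^2(L)=\cyc(L)$ (as noted in the introduction, $C^2$ is exactly the cyclic closure), and the theorem of Maslov and Oshiba cited at the start of Section~\ref{sec:main} guarantees that the cyclic closure of a context-free language is again context-free. There is essentially no obstacle here: the substantive work has already been done in Proposition~\ref{prop:tau}, and this corollary is a bookkeeping argument combining that result, closure under finite union, and the known cases $k\leq 2$. The one point worth stating with care is the \emph{finiteness} of the union, so that closure under finite (rather than arbitrary) union is all that is required.
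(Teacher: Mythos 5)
Your proposal is correct and follows exactly the route the paper intends: the corollary is deduced from Proposition~\ref{prop:tau} via the finite decomposition $C^k(L)=\bigcup_{1\leq\ell\leq k}\bigcup_{\sigma\in S_\ell}L_\sigma$ set up in Section~\ref{permclose}, closure of indexed languages under finite union, and the Maslov--Oshiba theorem for the $k\leq 2$ cases. Nothing is missing.
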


\bibliography{refs} \bibliographystyle{plain}
\end{document}